\documentclass[conference]{IEEEtran}
\IEEEoverridecommandlockouts

\usepackage{cite}
\usepackage{amsmath,amssymb,amsfonts}
\usepackage{algorithmic}
\usepackage{graphicx}
\usepackage{textcomp}
\usepackage{comment}
\usepackage{xcolor}
\usepackage{braket} 
\usepackage{mathtools} 
\usepackage[section]{placeins} 
\usepackage{amsthm}
\newtheorem{theorem}{Theorem}
\newtheorem{proposition}{Proposition}
\newtheorem{corollary}{Corollary}
\usepackage{tikz}
\usetikzlibrary{quantikz}

\newcommand{\om}{\omega}
\newcommand{\Zd}{\mathbb{Z}_d}
\newcommand{\Id}{\mathbb{I}_d}
\newcommand{\Cd}{\mathcal{C}_d}

\def\BibTeX{{\rm B\kern-.05em{\sc i\kern-.025em b}\kern-.08em
    T\kern-.1667em\lower.7ex\hbox{E}\kern-.125emX}}

\begin{document}

\title{Generalizing Pauli Checks for Qudit-based \\ Quantum Error Detection and Mitigation\\
\thanks{This work was supported by the U.S. Department of Energy, Office of Science, National Quantum Information Science Research Centers, Superconducting Quantum Materials and Systems Center (SQMS), under Contract No. 89243024CSC000002. Fermilab is managed by FermiForward Discovery Group, LCC, acting under Contract No. 89243024CSC000002.\\ \\
$^*$These authors contributed equally to this work.}
}
\author{\IEEEauthorblockN{Noble Agyeman-Bobie$^*$}
\IEEEauthorblockA{\textit{Department of Mathematics and Physics} \\
\textit{Grambling State University}\\
Grambling, LA, USA \\
nagyeman@gsumail.gram.edu}
\and
\IEEEauthorblockN{Quinn Langfitt$^*$}
\IEEEauthorblockA{\textit{Department of Computer Science} \\
\textit{Northwestern University}\\
Evanston, IL, USA \\
qlangfitt@u.northwestern.edu}
\and
\IEEEauthorblockN{Salahedeen Issa}
\IEEEauthorblockA{\textit{Department of Computer Science} \\
\textit{Princeton University}\\
Princeton, NJ, USA \\
salahedeen@princeton.edu}
\and[\hfill\mbox{}\par\mbox{}\hfill]
\IEEEauthorblockN{Nikos Hardavellas}
\IEEEauthorblockA{\textit{Department of Computer Science} \\
\textit{Northwestern University}\\
Evanston, IL, USA \\
nikos@northwestern.edu}
\and
\IEEEauthorblockN{Kaitlin N. Smith}
\IEEEauthorblockA{\textit{Department of Computer Science} \\
\textit{Northwestern University}\\
Evanston, IL, USA \\
kns@northwestern.edu}
}

\maketitle

\begin{abstract}

  Pauli Check Sandwiching (PCS) is a quantum error detection (QED) technique that protects a quantum circuit by utilizing a pair of controlled Pauli operators, or checks, and detecting errors that anti-commute with the checks. Further, PCS can be used for quantum error mitigation (QEM) via post-selection based on the Pauli check syndrome values. Currently, PCS is leveraged in the qubit space. In this paper, we introduce a generalized approach for applying PCS-based QED and QEM to quantum information of arbitrary dimension in the Hilbert space. Each pair of these extended checks consists of a sequence of gates in the Heisenberg-Weyl operator set that extend Pauli operators into the qudit space. These qudit checks use at least one ancilla qudit to detect qudit errors that do not commute with the unitary selected for the check. We show that our proposed methods for qudit QED can detect errors of arbitrary dimensions. More specifically, we prove that an arbitrary Heisenberg-Weyl error maps deterministically to a unique ancilla readout, and further, post-selecting on the $|0 \rangle$ readout guarantees unit fidelity in the noiseless check limit. We validate these findings numerically across dimensions $d=2$ through $d=9$, achieving error-mitigated fidelities above $97.5\% $ under realistic depolarizing error rates. 

\end{abstract}

\begin{IEEEkeywords}
Quantum computing, quantum error mitigation, quantum error detection, qudits
\end{IEEEkeywords}

\section{Introduction}

The field of quantum computing is overwhelmingly focused on leveraging ``quantum bits,'' or qubits, that are two-dimensional. When considering a single qubit, information can be either in one of two basis states, or a superposition of them both. However, many quantum technologies, such as trapped ions~\cite{low2020practical}, superconducting circuits~\cite{fischer2023universal}, and neutral atoms~\cite{lindon2023complete}, contain additional energy levels that can be leveraged for higher-dimensional information encoding. This enables a ``quantum digit,'' or a qudit. Because of the potential to efficiently explore high-dimensional spaces with these types of systems, there is a growing interest in developing methods for encoding qudits in a scalable manner. For example, architectures based on superconducting radio-frequency (SRF) cavities are emerging~\cite{reineri2023exploration}, and high-dimensional state preparation with  $d=20$ has recently been demonstrated~\cite{kim2025ultracoherent}.  

Developments in physical qudit realization motivate the exploration of how to unlock the full promise of these systems. Just as with qubit-based computation, high-dimensional entanglement can be prepared \cite{icrc19}, \cite{entang_journal}, and the work in Ref. \cite{baker2020efficient} demonstrates that the ability to compress many qubit states into a single qudit can free up ancillas for qubit-based computation. The benefits of leveraging qudits during computation include efficient circuit synthesis and reduction in resource requirements, among others, which could potentially make these types of quantum computers well-suited for high-dimensional applications such as lattice gauge theory and high energy physics \cite{kurkccuoglu2024qudit}. When qudits are used for storing intermediate states, it has been shown that temporary qudit levels within circuit decompositions enable significant improvements in circuit depth \cite{gokhale2019asymptotic}.

In lockstep with advances made on the theoretical front, qudit hardware progress has led to various proposals of universal gate sets for qudit computation on physical qudits~\cite{job2023efficient, kim2025ultracoherent}. However, much work remains to discover the best protocols that help  manage noise in qudit-based quantum systems during runtime. While increasing system dimension helps increase information capacity, it also increases the amount and complexity of the types of errors that can occur. As an example, a simple basis flip channel that could occur in a qubit scheme generalizes to noise that exchanges probability amplitudes between the $d$ orthonormal vectors that span the dimension-$d$ Hilbert space of a qudit. Because of the challenge that noise presents, effective error management techniques are necessary if qudit-based quantum systems are to become practical. 

In our work, we develop new techniques for qudit quantum error detection (QED) and quantum error mitigation (QEM) inspired by qubit-based Pauli-check sandwiching (PCS).
In particular, we construct generalized left and right Pauli checks for arbitrary dimension $d$ using the Heisenberg-Weyl operator set. We then prove that an arbitrary Heisenberg-Weyl error acting on a data qudit maps deterministically to a unique ancilla readout (Propositions~1 and~2), and that joint post-selection on both check layers yields unit fidelity in the noiseless-check limit (Theorem~1). Finally, we validate the protocol numerically across dimensions $d = 2$ through $d = 9$ on a single-gate modular addition circuit (Mod-2) and a two-qudit generalized GHZ state preparation circuit (GHZ-2), demonstrating error-mitigated fidelities above 97.5\% at realistic depolarizing error rates.

\section{Background}
\subsection{Higher-dimensional Quantum Information}

In general, the form of a single qudit for an arbitrary dimension, $d$, is given by

\begin{equation}\label{eq:gen_qudit}
\Ket{\psi_d} = \sum_{i=0}^{d-1} \alpha_i \Ket{i_d}.
\end{equation}

\noindent The amplitudes $\alpha_i$ in Eq.~\ref{eq:gen_qudit}
are complex-valued quantities that satisfy the Born rule such that

\begin{equation}\label{eq:born-rule-qudit}
\sum_{i=0}^{d-1} |\alpha_i|^2=1.
\end{equation}

\noindent Just as in qubit-based systems, a quantum system of dimension-$d$ qudits can also exhibit superposition, and this superposition is leveraged for quantum parallelism that enables processing of multiple valuations of information in a single quantum computation. High-dimensional maximal superposition for a single qudit can be achieved using the Chrestenson gate~\cite{zilic2007scaling}. The elements of the Chrestenson matrix are powers of the $d^{th}$ roots of unity, $\omega = e^{2\pi i/ d}$ \cite{b6, zilic2007scaling}. 


%

Each element of the dimension-$d$ Chrestenson matrix takes the form $\omega^{jk}$, where $j$ is determined by the column index and $k$ is determined by the row index. In this indexing scheme, the indices $j$ and $k$ begin with $j=k=0$ and increase to $j=k=(d-1)$. It is observed that, for the case $d=2$, the Hadamard matrix for maximal qubit superposition results. Thus, the Chrestenson transform matrices can be considered as generalizations of the Hadamard transform for higher-dimensioned systems. The generalized Chrestenson transform matrix, $\mathbf{C}_d$, is


\begin{equation} \label{eq:gen_Chrestenson_1}
\renewcommand\arraystretch{1}
\mathbf{C}_d =
\frac{1}{\sqrt{d}}\begin{bmatrix*}[c]
    \omega^{0 \cdot 0} & \omega^{1 \cdot 0} & \dots & \omega^{(d-1) \cdot 0} \\
    \omega^{0 \cdot 1} & \omega^{1 \cdot 1} &  \dots & \omega^{(d-1) \cdot 1} \\
    \vdots & \vdots & \ddots & \vdots  \\
    \omega^{0 \cdot (d-1)} & \omega^{1 \cdot (d-1)}  & \dots & \omega^{(d-1) \cdot (d-1)}
\end{bmatrix*}.
\end{equation}

\noindent Using Eq.~\ref{eq:gen_Chrestenson_1}, the $d=4$ Chrestenson gate, $\mathbf{C}_4$, is found to be

\begin{equation} \label{eq:Chrestenson_4}
\renewcommand\arraystretch{1}
\mathbf{C}_4 =
\frac{1}{2}\begin{bmatrix*}[c]
    1 & 1 & 1 & 1 \\
    1 & i &  -1 & -i \\
    1 & -1 & 1 & -1  \\
    1 & -i  & -1 & i   
\end{bmatrix*}.
\end{equation}

\noindent An example physical implementation of the $\mathbf{C}_4$ qudit operator can be found in Ref.~\cite{smith2018radix}.

The Pauli operators from qubit-based quantum computation can also be generalized into higher dimensions. Mathematically, the Pauli-$\mathbf{X}$ operator can be considered as a modulo-2 addition-by-one operation since it transforms qubit $\Ket{0_2}$ into $\Ket{((0+1) \text{mod} \; 2)_2}=\Ket{1_2}$ and $\Ket{1_2}$ into $\Ket{((1+1) \text{mod} \; 2)_2}=\Ket{0_2}$. Thus, the Pauli-$\mathbf{X}$  gate can be understood as a basis permutation operation due to modulo-$k$ addition with respect to modulus $d=2$. The single qudit modulo-addition operations that we will use in this work are denoted as $\mathbf{M}^n_d$. For qudit systems of dimension-$d$, there are $d-1$ distinct non-trivial single-qudit $\mathbf{M}^n_d$ operators. For example, a $d=4$ quantum system would have a total of three non-trivial modulo-addition operations that permute basis vectors. The non-trivial $\mathbf{M}^n_4$ gates are

\begin{equation} \label{eq:M_1}
\begin{array}{rrrrr}
\mathbf{M}^1_4=\left[
\begin{array}{cccc}
0 & 0 & 0 &1 \\
1 & 0 & 0 & 0\\
0 & 1 & 0 & 0\\
0 & 0& 1 & 0
\end{array} \right],
\end{array}
\end{equation}

\begin{equation} \label{eq:M_2}
\begin{array}{rrrrr}
\mathbf{M}^2_4=\left[
\begin{array}{cccc}
0 & 0 & 1 &0 \\
0 & 0 & 0 & 1\\
1 & 0 & 0 & 0\\
0 & 1& 0 & 0
\end{array} \right],
\end{array}
\end{equation}

\begin{equation} \label{eq:M_3}
\begin{array}{rrrrr}
\mathbf{M}^3_4=\left[
\begin{array}{cccc}
0 & 1 & 0 &0 \\
0 & 0 & 1 & 0\\
0 & 0 & 0 & 1\\
1 & 0& 0 & 0
\end{array} \right].
\end{array}
\end{equation}

The qubit Pauli-$\mathbf{Z}$ operator can also be generalized into qudit space using the $d^{th}$ roots of unity. In the generalized Pauli group (Heisenberg-Weyl group), the number of $\mathbf{Z}$-operators depends on the system dimension, $d$. The transformation matrix becomes

\begin{equation}
    \mathbf{Z}^n_d= \begin{bmatrix*}[c]
        \omega^{n \cdot 0} & 0 & \dots & 0 \\
        0 & \omega^{n \cdot 1} & \dots & 0 \\
        \vdots & \vdots & \ddots & \vdots \\
        0 & 0 & \dots & \omega^{n \cdot (d-1)}
    \end{bmatrix*},
\end{equation}

\noindent where a phase shift on a particular qudit is produced via $\mathbf{Z}^n_d \Ket{k} = \omega^{nk} \Ket{k}$. The exponents $n$ range over $[1,d-1]$ as raising $\mathbf{Z}_d$ to the power of $d$ produces the identity gate. For a $d=4$ quantum system, we consider the following generalized $\mathbf{Z}$ operators:

\begin{equation} \label{eq:Z_1}
\begin{array}{rrrrr}
\mathbf{Z}^{1}_4=\left[
\begin{array}{cccc}
1 & 0 & 0 &0 \\
0 & i & 0 & 0\\
0 & 0 & -1 & 0\\
0 & 0& 0 & -i
\end{array} \right],
\end{array}
\end{equation}

\begin{equation} \label{eq:Z_2}
\begin{array}{rrrrr}
\mathbf{Z}^{2}_4=\left[
\begin{array}{cccc}
1 & 0 & 0 &0 \\
0 & -1 & 0 & 0\\
0 & 0 & 1 & 0\\
0 & 0& 0 & -1
\end{array} \right],
\end{array}
\end{equation}

\begin{equation} \label{eq:Z_2}
\begin{array}{rrrrr}
\mathbf{Z}^{3}_4=\left[
\begin{array}{cccc}
1 & 0 & 0 &0 \\
0 & -i & 0 & 0\\
0 & 0 & -1 & 0\\
0 & 0& 0 & i
\end{array} \right].
\end{array}
\end{equation}

Controlled variations of the $\mathbf{M}^n_d$ and $\mathbf{Z}^n_d$ operators are possible to create two-qudit gates. It is important to note that these controlled gates can be conditioned on the dimension-$d$ control qudit having the value of one of $d$ possible levels.

\subsection{Pauli-check based Qubit QED and QEM}

In $d=2$ quantum computing, the control qubit is not the only qubit influencing state change in a controlled-unitary operation, $CU$. Through the phenomenon known as phase kickback, the control qubit can also be affected by the target. In this case, a phase value (or eigenvalue) is kicked back from the target to the control if the target qubit's state is an eigenvector of the single-qubit unitary, $U$, embedded in the $CU$ gate. Phase kickback with $CU=CX$ is pictured in Fig.~\ref{fig:pk_2dim}. Phase kickback is crucial for Pauli-check sandwiching (PCS), forming the basis for its error mitigation technique.
\begin{figure}[htbp]
    \centering
    \begin{quantikz}
        \lstick{$\alpha\ket{0} + \beta\ket{1}$} & \ctrl{1} & \qw & \rstick{$\alpha\ket{0} - \beta\ket{1}$} \\
        \lstick{$\frac{1}{\sqrt{2}}(\ket{0} - \ket{1})$} & \targ{} & \qw & \rstick{$\frac{1}{\sqrt{2}}(\ket{0} - \ket{1})$}
    \end{quantikz}
    \caption{Phase kickback in qubits. Here, a phase of -1 gets kicked back onto the basis state $\ket{1}$ of the top qubit.}
    \label{fig:pk_2dim}
\end{figure}

PCS is an emerging technique to detect and mitigate errors in qubit-based quantum computing~\cite{gonzales2023quantum, qutracer, langfitt2024pauli, liu2025quantem}. PCS can also be leveraged for guiding quantum computer characterization and resource allocation~\cite{langfitt2024dynamic}. As seen in Fig.~\ref{fig:pcs-circuit}, PCS surrounds a payload circuit, $U$, with controlled Pauli operator checks. Errors on $U$ can be detected by measuring the ancilla qubit's value.
It is important that the relationship $R_{1}UL_{1} = U$ holds so the introduced checks do not affect the overall state of the payload circuit.

Observe that the relation $R_{1}UL_{1} = U$ implies $L_1 = U^{-1} R_1^{-1} U$, so the left check can be absorbed into $U$. In the equivalent circuit, $U$ commutes through the left check, leaving $R_1^{-1}$ and $R_1$ sandwiching only the error channel $\varepsilon$ following $U$. Consequently, the ancilla syndrome depends solely on the commutation relation between the error and the right checks, independent of $U$. This equivalent perspective, noted in~\cite{gonzales2023quantum}, allows us to view the Pauli checks as a transformation on the error channel itself, independent of the payload circuit. We make use of this interpretation in the proofs in Sec.~\ref{sec:general_pcs}.

The fact that all errors can be decomposed into the Pauli (or, more generally, Heisenberg-Weyl) basis is central to PCS. By exploiting this, we can construct the checks such that the underlying Pauli operators composing the error channel are detected via right checks that anti-commute with them (for instance, using a $Z$-type right check to detect $X$-type errors). The resulting phase kickback causes the ancilla to be measured in a state other than $\ket{0}$, indicating that an error is present. We then use this signal for QEM by discarding the error-corrupted shots, leading to increased fidelity in the final distribution.

\section{Generalizing Pauli-check Methods for Arbitrary Dimensions}
\label{sec:general_pcs}
PCS for any finite dimension $d$ depends on the existence of a generalized Pauli set within that dimension, and the $d^{th}$ roots of unity which form a cyclic group within the dimension. Because both structures scale linearly as the dimension increases, the order of operation for the PCS technique 
can be generalized to any finite dimension within the Hilbert space. Thus, the left and right checks for the generalized PCS method take the following forms:

\begin{equation}
    \label{eq:L1}
    L_X = \sum_{k=1}^{d-1} M_d^k \otimes \ket{k}\bra{k} +\mathbb{I}_d \otimes \ket{0}\bra{0}
\end{equation}
\begin{equation}
    \label{eq:L2}
    L_Z = \sum_{k=1}^{d-1} Z_d^{d-k} \otimes\ket{k}\bra{k} +\mathbb{I}_d \otimes \ket{0}\bra{0}
\end{equation}
where $L_X$ and $L_Z$ represent left checks for phase errors and basis permutation errors, respectively, and

\begin{equation}
    \label{eq:R1}
    R_X = \sum_{k=1}^{d-1} M_d^{d-k} \otimes \ket{k}\bra{k} +\mathbb{I}_d \otimes \ket{0}\bra{0}
\end{equation}
\begin{equation}
    \label{eq:R2}
    R_Z = \sum_{k=1}^{d-1} Z_d^k \otimes \ket{k}\bra{k}+\mathbb{I}_d \otimes \ket{0}\bra{0}
\end{equation}
where $R_X$ and $R_Z$ represent right checks for the respective phase and basis errors. The circuit diagram for PCS arbitrary qudit error detection is illustrated in Fig.~\ref{fig:arbitrary-pcs-any-d}. Under this scheme, the fidelity of the PCS mitigated circuit we obtain is 1, since all possible errors are detected using the checks. In theory, PCS can yield unity fidelity, assuming the checks used to implement the protocol are noiseless, as described in Ref.~\cite{gonzales2023quantum}, and we provide a proof for the qudit case in the following section. However, as circuit complexity increases for any dimension, it will become increasingly difficult to apply both error checks per qudit due to scalability. As a result, we limit our checks per qudit to only those that commute with the local gate operations on said qudits.

\begin{figure}[htb]
     \centering
         \includegraphics[width=0.9\linewidth,]{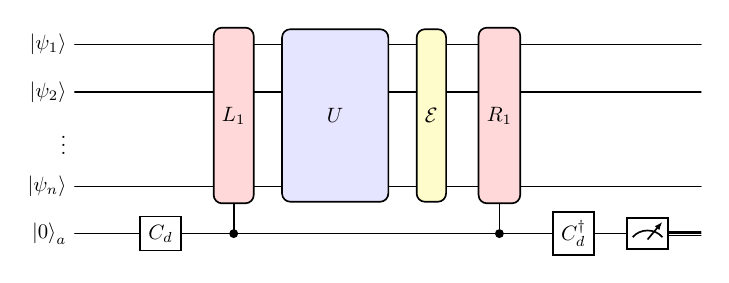}
        \caption{General PCS circuit layout. The red unitaries $L_1$ and $R_1$ represent the Pauli checks that sandwich the main payload circuit $U$. Measurement of the ancillas provides detection of the error channel $\varepsilon$. The Chrestenson gate $C_d$ is equivalent to the Hadamard gate when $d=2$.}
        \label{fig:pcs-circuit}
        
\end{figure}

\begin{figure}[htbp]
    \centering
    \scalebox{0.57}{
    \begin{quantikz}
        \lstick{$\ket{\psi}$} & \qw & \gate[style={rectangle, draw, inner sep=5pt}]{L_Z} & \gate[style={rectangle, draw, inner sep=5pt}]{L_X} & \gate{\mathcal{D}_p} & \gate[style={rectangle, draw, inner sep=5pt}]{R_X} & \gate[style={rectangle, draw, inner sep=5pt}]{R_Z} & \qw & \qw \\
        \lstick{\shortstack{$\ket{0}$ \\ \scriptsize ancilla}} & \gate[style={rectangle, draw, inner sep=2pt}]{\mathrm{C_d}} & \qw & \gate[style={rectangle, draw, rounded corners, inner sep=-1pt}]{\text{d-1,..., 2,1}} \vqw{-1} & \qw & \gate[style={rectangle, draw, rounded corners, inner sep=-1pt}]{\text{1, 2,...,d-1}} \vqw{-1} & \qw & \gate{\mathrm{C_d}^\dagger} & \meter{} \\
        \lstick{\shortstack{$\ket{0}$ \\ \scriptsize ancilla}} & \gate[style={rectangle, draw, inner sep=2pt}]{\mathrm{C_d}} & \gate[style={rectangle, draw, rounded corners, inner sep=-1pt}]{\text{1, 2,...d-1}} \vqw{-2} & \qw & \qw & \qw & \gate[style={rectangle, draw, rounded corners, inner sep=-1pt}]{\text{d-1,..., 2, 1}} \vqw{-2} & \gate{\mathrm{C_d}^\dagger} & \meter{}
    \end{quantikz}
    }
    \caption{ Generalized PCS circuit for arbitrary error detection. Again, two ancillas are used as syndrome registers since two sets of checks are used.}
    \label{fig:arbitrary-pcs-any-d}
\end{figure}

\subsection{Detecting Errors in Arbitrary Dimensions}

\noindent
We consider data qudits of dimension~$d$.  For notational convenience, we write $X \equiv M^1_d$ and $Z \equiv Z^1_d$ for the generators of the shift and clock operators on a single qudit, so that $M^k_d = X^k$ and $Z^n_d = Z^n$.  The Heisenberg--Weyl (HW) operators on a single dimension-$d$ system are $\{X^a Z^b : a,b \in \Zd\}$, satisfying the commutation relation
\begin{equation}\label{eq:comm}
    Z^k X^a \;=\; \om^{ka}\, X^a Z^k, \qquad \om = e^{2\pi i / d}.
\end{equation}
An equivalent form that will be used for the $X$-check analysis is
\begin{equation}\label{eq:comm_dual}
    Z^b X^k \;=\; \om^{bk}\, X^k Z^b.
\end{equation}

\noindent
We use the checks $L_X$, $R_X$ and $L_Z$, $R_Z$ as defined in Eqs.~\eqref{eq:L1}--\eqref{eq:R2} of Section~\ref{sec:general_pcs}.  In the shorthand above, $L_X$ and $R_X$ are the $X$-checks (shift-type, detecting phase errors) and $L_Z$ and $R_Z$ are the $Z$-checks (clock-type, detecting basis permutation errors).  Together with the Chrestenson gate $\Cd$ on each ancilla, we refer to this as the \emph{generalized qudit PCS protocol}.  The checks satisfy
\begin{equation}\label{eq:check_identity}
    R_Z\, U\, L_Z = U, \qquad R_X\, U\, L_X = U,
\end{equation}
so that the introduced checks do not alter the action of the payload circuit~$U$.  In Propositions~\ref{prop:zcheck}--\ref{prop:xcheck} and Theorem~\ref{thm:unit}, we establish the error detection and unit fidelity results for a single data qudit, and Corollary~\ref{cor:multi} then extends the result to an $n$-qudit register by applying independent check pairs to each qudit.


\begin{proposition}[$Z$-check ancilla collapse]\label{prop:zcheck}
    Let $E = X^a Z^b$ be an arbitrary Heisenberg--Weyl error acting on the data qudit, and suppose the $Z$-check layer ($L_Z$, $R_Z$) of the generalized qudit PCS protocol is applied. Then the ancilla maps deterministically to $\ket{a \bmod d}$, identifying the shift index of the error. In particular, $E$ is undetected by the $Z$-checks if and only if $a \bmod d = 0$.
\end{proposition}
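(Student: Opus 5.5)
The plan is to work branch-by-branch in the ancilla computational basis, using the ``absorbed'' picture described after Fig.~\ref{fig:pcs-circuit}. First, I will write the ancilla after the initial Chrestenson gate as $\Cd\ket{0} = \frac{1}{\sqrt d}\sum_{k=0}^{d-1}\ket{k}$. This uses that the $(k,j)$ entry of $\mathbf{C}_d$ in Eq.~\eqref{eq:gen_Chrestenson_1} is $\om^{jk}$, so that $\Cd\ket{j}=\frac{1}{\sqrt d}\sum_k \om^{jk}\ket{k}$.

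Since $L_Z$ and $R_Z$ are block diagonal in the ancilla basis, on branch $\ket{k}$ the data qudit undergoes $Z^{k}\,E\,U\,Z^{d-k}$; the $k=0$ branch is just $EU$ and fits the same formula. Reading Eq.~\eqref{eq:check_identity} branch by branch gives $Z^k U Z^{-k}=U$ for every $k$, i.e.\ $U$ commutes with $Z$. Hence the branch-$k$ operator equals $(Z^k E Z^{-k})\,U$, and only the conjugated error matters, independent of the payload.

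The key computation is then $Z^k X^a Z^b Z^{-k}$. By Eq.~\eqref{eq:comm}, $Z^kX^a=\om^{ka}X^aZ^k$, and $Z^k$ commutes with $Z^b$, so
\begin{equation*}
Z^k X^a Z^b Z^{-k} = \om^{ka} X^a Z^b .
\end{equation*}
Thus the joint state before the final gate is the product state
\begin{equation*}
\Bigl(\tfrac{1}{\sqrt d}\sum_{k}\om^{ka}\ket{k}\Bigr)\otimes EU\ket{\psi} = \Cd\ket{a \bmod d}\otimes EU\ket{\psi}.
\end{equation*}
Applying $\Cd^\dagger$ then yields $\ket{a\bmod d}$ exactly, a deterministic readout that is disentangled from the data. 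The ``undetected iff $a\equiv 0$'' clause follows immediately, because the readout is $\ket{0}$ precisely when $a\bmod d=0$.

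The main obstacle is bookkeeping of conventions rather than any deep step. Three things must be fixed consistently: the ordering of $L_Z$ (which applies $Z^{d-k}=Z^{-k}$) versus $R_Z$ (which applies $Z^{k}$) on either side of $E$; the direction of the commutation relation; and the indexing of $\Cd$. A flip in any one of these would send the ancilla to $\ket{-a \bmod d}$ instead of $\ket{a\bmod d}$. I will therefore verify the sign explicitly on the $d=2$ case with $E=X$, where the expected readout is $\ket{1}$. I would also note that the $d$ outcomes are distinct for distinct $a \bmod d$, which justifies ``identifying the shift index.''
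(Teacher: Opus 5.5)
Your proposal is correct and is essentially the paper's proof: you extract $\om^{ka}$ from $Z^k X^a Z^b Z^{-k}$ via Eq.~\eqref{eq:comm}, factor the data register out of the sum, and collapse the ancilla to $\ket{a \bmod d}$ with $\Cd^\dagger$. The only cosmetic differences are that you keep $U$ explicit (deducing from Eq.~\eqref{eq:check_identity} that $U$ commutes with $Z$, where the paper simply places $E$ between the checks), and that you recognize $\frac{1}{\sqrt d}\sum_k \om^{ka}\ket{k}$ as $\Cd\ket{a}$ and invoke unitarity instead of evaluating the root-of-unity orthogonality sum. One small caution: your planned $d=2$ sanity check cannot detect the $a \mapsto -a$ sign flip you are worried about, since $-1 \equiv 1 \pmod 2$, so test with $d \ge 3$ instead (e.g.\ $d=3$, $E=X$, expected readout $\ket{1}$).
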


\begin{proof}
Let $\ket{\psi}_D$ denote the state of the data qudit and let the ancilla be initialized to $\ket{0}_A$.  After applying the Chrestenson gate $\Cd$ to the ancilla, we have the product state
\begin{equation}\label{eq:step1}
    \ket{\psi}_D \otimes \frac{1}{\sqrt{d}} \sum_{k=0}^{d-1} \ket{k}_A.
\end{equation}

\noindent The left check $L_Z$ in~\eqref{eq:L2} acts as $Z^{d-k}$ on the data qudit when the ancilla is in state $\ket{k}$, and as the identity when $k=0$.  The joint state becomes
\begin{equation}\label{eq:step2}
    \frac{1}{\sqrt{d}} \sum_{k=0}^{d-1} Z^{d-k} \ket{\psi}_D \otimes \ket{k}_A,
\end{equation}
where we adopt the convention $Z^{d-0} = Z^d = \Id$ so that the $k=0$ term is included in the sum.

\medskip
\noindent
The error $E = X^a Z^b$ acts on the data register:
\begin{equation}\label{eq:step3}
    \frac{1}{\sqrt{d}} \sum_{k=0}^{d-1} X^a Z^b\, Z^{d-k} \ket{\psi}_D \otimes \ket{k}_A.
\end{equation}

\noindent The right check $R_Z$ in~\eqref{eq:R2} applies $Z^k$ to the data qudit conditioned on ancilla state $\ket{k}$:
\begin{equation}\label{eq:step4}
    \frac{1}{\sqrt{d}} \sum_{k=0}^{d-1} Z^k\, X^a Z^b\, Z^{d-k} \ket{\psi}_D \otimes \ket{k}_A.
\end{equation}

\noindent Using the commutation relation~\eqref{eq:comm}, we rewrite the leading $Z^k X^a$ factor as $Z^k X^a = \om^{ka}\, X^a Z^k$.
Substituting into~\eqref{eq:step4} gives
\begin{equation}\label{eq:step5a}
    \frac{1}{\sqrt{d}} \sum_{k=0}^{d-1} \om^{ka}\, X^a Z^k Z^b Z^{d-k} \ket{\psi}_D \otimes \ket{k}_A.
\end{equation}
Since the $Z$ operators are mutually commuting, the exponents combine as
\begin{equation}\label{eq:Zcombine}
    Z^k Z^b Z^{d-k} = Z^{k + b + d - k} = Z^{b+d} = Z^b,
\end{equation}
where the last equality uses $Z^d = \Id$.  Hence~\eqref{eq:step5a} simplifies to
\begin{equation}\label{eq:step5b}
    \frac{1}{\sqrt{d}}\, X^a Z^b \ket{\psi}_D \otimes \sum_{k=0}^{d-1} \om^{ka} \ket{k}_A.
\end{equation}
Note that the data register has factored out of the sum, carrying the error $X^a Z^b$ uniformly across all branches.

\medskip
\noindent The inverse Chrestenson gate acts on the ancilla as $\Cd^\dagger \ket{k} = \frac{1}{\sqrt{d}} \sum_{m=0}^{d-1} \om^{-km} \ket{m}$.  Applying it to the ancilla state in~\eqref{eq:step5b} yields
\begin{equation}\label{eq:step6}
    X^a Z^b \ket{\psi}_D \otimes \frac{1}{d} \sum_{m=0}^{d-1} \left(\sum_{k=0}^{d-1} \om^{k(a-m)}\right) \ket{m}_A.
\end{equation}

\noindent The inner sum is evaluated by the orthogonality of the $d$th roots of unity:
\begin{equation}\label{eq:fourier}
    \frac{1}{d} \sum_{k=0}^{d-1} \om^{k(a-m)} = \delta_{a,m \bmod d}.
\end{equation}
Substituting~\eqref{eq:fourier} into~\eqref{eq:step6}, the ancilla collapses to a single basis state and the final joint state is
\begin{equation}\label{eq:final}
    X^a Z^b \ket{\psi}_D \otimes \ket{a \bmod d}_A.
\end{equation}
The ancilla readout therefore deterministically identifies the shift index~$a$ of the error.  Since $E$ yields $\ket{0}_A$ if and only if $a \equiv 0 \pmod{d}$, the $Z$-checks leave undetected precisely the pure phase errors $\{Z^b : b \in \Zd\}$.
\end{proof}


\begin{proposition}[$X$-check ancilla collapse]\label{prop:xcheck}
    Let $E = X^a Z^b$ be an arbitrary Heisenberg--Weyl error acting on the data qudit, and suppose the $X$-check layer ($L_X$, $R_X$) of the generalized qudit PCS protocol is applied. Then the ancilla maps deterministically to $\ket{b \bmod d}$, identifying the phase index of the error. In particular, $E$ is undetected by the $X$-checks if and only if $b \bmod d = 0$.
\end{proposition}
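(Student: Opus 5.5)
The plan is to follow the proof of Proposition~\ref{prop:zcheck} step by step, with the roles of the shift and clock operators exchanged. Commuting the shift powers of the $X$-checks past the $Z^b$ factor of the error should produce a branch-dependent phase $\om^{bk}$. This phase is then decoded by $\Cd^\dagger$ exactly as before. We again use the equivalent picture of Sec.~\ref{sec:general_pcs}: by \eqref{eq:check_identity} the payload can be commuted away, so only the sandwich $R_X\, E\, L_X$ acting on $\ket{\psi}_D$ matters.

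\textbf{Step 1 (ancilla preparation and left check).} Start from $\ket{\psi}_D\otimes\ket{0}_A$ and apply $\Cd$ to the ancilla, which gives $\ket{\psi}_D\otimes\frac{1}{\sqrt d}\sum_k\ket{k}_A$. Then apply $L_X$ from \eqref{eq:L1}. Using the convention $X^0=\Id$ to absorb the $k=0$ term, this yields $\frac{1}{\sqrt d}\sum_{k=0}^{d-1}X^k\ket{\psi}_D\otimes\ket{k}_A$.

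\textbf{Step 2 (error and right check).} Apply $E=X^aZ^b$, and then $R_X$ from \eqref{eq:R1}, which acts as $X^{d-k}$ on branch $k$. Each branch now carries the operator $X^{d-k}X^aZ^bX^k$.

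\textbf{Step 3 (commutation).} Use \eqref{eq:comm_dual}, $Z^bX^k=\om^{bk}X^kZ^b$, to move $X^k$ to the left past $Z^b$. Since the shift powers commute among themselves and $X^d=\Id$, the branch operator becomes
\begin{equation*}
X^{d-k}X^aZ^bX^k=\om^{bk}X^{d-k+a+k}Z^b=\om^{bk}X^aZ^b.
\end{equation*}
The data register therefore factors out uniformly across branches, and the joint state is $X^aZ^b\ket{\psi}_D\otimes\frac{1}{\sqrt d}\sum_k\om^{bk}\ket{k}_A$.

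\textbf{Step 4 (readout).} Apply $\Cd^\dagger$ and invoke the root-of-unity orthogonality \eqref{eq:fourier} with $a$ replaced by $b$. The ancilla collapses to $\ket{b\bmod d}_A$, and the data register is left as $X^aZ^b\ket{\psi}_D$. The readout is $\ket{0}$ if and only if $b\equiv 0\pmod d$, so the undetected errors are exactly the pure shifts $\{X^a\}$.

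The only step requiring care is Step~3, namely the direction of the commutation and hence the sign of the kicked-back phase. If one instead commuted $Z^b$ rightward, or used the control ordering $d-1,\dots,1$ drawn in Fig.~\ref{fig:arbitrary-pcs-any-d} in place of \eqref{eq:L1}, the phase could come out as $\om^{-bk}$ and the readout would be $\ket{-b\bmod d}$. I will fix the definitions \eqref{eq:L1} and \eqref{eq:R1} and the relation \eqref{eq:comm_dual} exactly as stated, and check that this gives $+b$. Either sign gives a bijective, deterministic labelling of the phase index with $0\mapsto 0$, so the detection conclusion is unaffected in any case.
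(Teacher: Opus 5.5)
Your proposal is correct and is essentially the paper's own Appendix derivation: the same branch operator $X^{d-k}X^aZ^bX^k$, the same use of \eqref{eq:comm_dual} to extract $\om^{bk}$ and collapse the shift exponents via $X^d=\Id$, and the same Fourier orthogonality under $\Cd^\dagger$ giving $\ket{b \bmod d}_A$. One small correction to your closing caveat: the direction in which you commute cannot change the sign, since \eqref{eq:comm_dual} is an operator identity; only a different control assignment in the checks (e.g.\ swapping the roles of $X^k$ and $X^{d-k}$) would yield $\ket{-b \bmod d}_A$.
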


\begin{proof}
    The proof is analogous to that of Proposition~\ref{prop:zcheck}, with the roles of the shift and clock operators interchanged. The commutation relation~\eqref{eq:comm} is applied in the dual form~\eqref{eq:comm_dual}, $Z^b X^k = \om^{bk} X^k Z^b$, extracting the phase factor $\om^{bk}$ in place of $\om^{ka}$, and the shift operators combine as $X^{d-k} X^a X^k = X^a$ via $X^d = \Id$. The remainder of the argument proceeds identically, with Fourier orthogonality collapsing the ancilla to $\ket{b \bmod d}_A$. We provide the full step-by-step derivation in the Appendix.
\end{proof}


\begin{theorem}[Unit fidelity under joint post-selection]\label{thm:unit}
Let $\mathcal{E}$ be an arbitrary completely positive, trace-preserving (CPTP) error channel acting on a single data qudit of dimension~$d$, with Kraus representation $\mathcal{E}(\rho) = \sum_\ell K_\ell \,\rho\, K_\ell^\dagger$.  Under the generalized qudit PCS protocol with two check layers---the $Z$-check layer ($L_Z$, $R_Z$) and the $X$-check layer ($L_X$, $R_X$), each with a dedicated ancilla---postselecting both ancillas on $\ket{0}$ yields unit fidelity in the noiseless-check limit.
\end{theorem}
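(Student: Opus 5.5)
The plan is to reduce the general CPTP channel to Heisenberg--Weyl errors and then use Propositions~\ref{prop:zcheck} and~\ref{prop:xcheck} to read off the joint syndrome. The $d^2$ operators $\{X^aZ^b : a,b\in\Zd\}$ form an orthogonal operator basis of the $d\times d$ matrices under the Hilbert--Schmidt inner product. So we first expand each Kraus operator as
\begin{equation*}
K_\ell=\sum_{a,b\in\Zd} c^{(\ell)}_{ab}\,X^aZ^b,
\qquad
c^{(\ell)}_{ab}=\tfrac{1}{d}\,\mathrm{Tr}\big((X^aZ^b)^\dagger K_\ell\big).
\end{equation*}
The checks are noiseless and act linearly. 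Hence the whole protocol with a single Kraus operator $K_\ell$ on a purification of the data state sends $\ket{\psi}_D\ket{0}_{A_1}\ket{0}_{A_2}$ to the superposition over $(a,b)$ of the individual HW-error outputs.

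The central step is to show that the two check layers compose as in Fig.~\ref{fig:arbitrary-pcs-any-d}: $L_Z$, $L_X$, error, $R_X$, $R_Z$. The $X$-layer is the inner sandwich. By Proposition~\ref{prop:xcheck}, applied with the outer $L_Z$ branch operator absorbed, the $X$-ancilla records $b$ and the data carries $X^aZ^b Z^{d-k}\ket{\psi}$ on branch $k$ of the $Z$-ancilla. For this we need the argument of Proposition~\ref{prop:xcheck} to hold on each branch state $Z^{d-k}\ket{\psi}$, which it does since that proposition holds for arbitrary input. The effective operator between $L_Z$ and $R_Z$ is then still $X^aZ^b$, now tensored with $\ket{b}_{A_2}$. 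Proposition~\ref{prop:zcheck} then gives the $Z$-ancilla outcome $\ket{a}_{A_1}$. For a single Kraus operator the final state is therefore
\begin{equation*}
\sum_{a,b} c^{(\ell)}_{ab}\, X^aZ^b\ket{\psi}_D\otimes\ket{a}_{A_1}\ket{b}_{A_2}.
\end{equation*}
Which ancilla is nominally called the $X$- or $Z$-ancilla does not matter, since both indices are recorded in orthogonal ancilla basis states. I expect this composition step, and keeping the Chrestenson/inverse-Chrestenson placement consistent with the nesting, to be the main point requiring care. I would verify it by writing the joint branch sum over $(k,j)$ and applying the commutation relation~\eqref{eq:comm} twice.

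Finally, we project both ancillas onto $\ket{0}\ket{0}$. Only the $(a,b)=(0,0)$ term survives, leaving $c^{(\ell)}_{00}\ket{\psi}_D$. Summing over $\ell$, the unnormalized post-selected state is $\big(\sum_\ell |c^{(\ell)}_{00}|^2\big)\ket{\psi}\bra{\psi}$, or the same scalar multiple of $\rho$ for mixed inputs, by linearity. If the acceptance probability $p_0=\sum_\ell|c^{(\ell)}_{00}|^2$ is nonzero, renormalizing gives exactly $\ket{\psi}\bra{\psi}$ (resp.\ $\rho$). Hence the fidelity with the ideal output is $1$. Restoring the payload $U$ via~\eqref{eq:check_identity} and the absorption argument in Sec.~\ref{sec:general_pcs} turns this into the ideal output $U\rho U^\dagger$. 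The degenerate case $p_0=0$, where nothing is accepted, will be noted as excluded.
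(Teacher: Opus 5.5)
Your proposal is correct and follows essentially the same argument as the paper. Both expand each $K_\ell$ in the Heisenberg--Weyl basis and use the Fourier orthogonality behind Propositions~\ref{prop:zcheck} and~\ref{prop:xcheck}, so that post-selecting both ancillas on $\ket{0}$ keeps only the $c^{(\ell)}_{00}$ component, and normalization then returns the ideal state. The paper instead packages this as the composed post-selected twirl $K'_\ell=\frac{1}{d^2}\sum_{k,j}X^{-j}Z^kK_\ell Z^{-k}X^j$ rather than tracking the full syndrome $\ket{a}\ket{b}$ as you do, and it does not spell out the nesting of the two layers or the zero-acceptance case $p_0=0$, both of which you handle explicitly.
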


\begin{proof}
By the identity~\eqref{eq:check_identity}, the left checks can be absorbed into~$U$, so that the checks act solely as a transformation on the error channel, independent of the payload circuit (cf.~\cite{gonzales2023quantum}).  In the equivalent circuit, each check layer transforms the error by conjugation and post-selection.  As shown in the proof of Proposition~\ref{prop:zcheck}, the $Z$-check layer maps each error component $E$ to $\frac{1}{d}\sum_{k=0}^{d-1} Z^k\, E\, Z^{-k}$, and by the analogous argument in Proposition~\ref{prop:xcheck}, the $X$-check layer maps $E$ to $\frac{1}{d}\sum_{j=0}^{d-1} X^{-j}\, E\, X^{j}$.  Composing both layers, the postselected output state is
\begin{equation}\label{eq:thm_rho_ps}
    \rho_{\mathrm{ps}} = \frac{\sum_\ell K'_\ell\, (U\rho_0 U^\dagger)\, K'^{\,\dagger}_\ell}{\mathrm{Tr}\!\left[\sum_\ell K'_\ell\, (U\rho_0 U^\dagger)\, K'^{\,\dagger}_\ell\right]},
\end{equation}
where the transformed Kraus operators are
\begin{equation}\label{eq:transformed_kraus}
    K'_\ell = \frac{1}{d^2} \sum_{k=0}^{d-1} \sum_{j=0}^{d-1} X^{-j}\, Z^k\, K_\ell\, Z^{-k}\, X^{j}.
\end{equation}
Each Kraus operator admits a unique expansion in the HW basis,
\begin{equation}\label{eq:kraus_expand}
    K_\ell = \sum_{a,b \in \Zd} c_{ab}^{(\ell)}\, X^a Z^b.
\end{equation}
Substituting~\eqref{eq:kraus_expand} into~\eqref{eq:transformed_kraus} and applying the commutation relations~\eqref{eq:comm} and~\eqref{eq:comm_dual}, the conjugation acts on each HW component as
\begin{equation}\label{eq:conjugation}
    X^{-j}\, Z^k\, (X^a Z^b)\, Z^{-k}\, X^{j} = \om^{ka + bj}\, X^a Z^b.
\end{equation}
Hence the transformed Kraus operator becomes
\begin{equation}\label{eq:kraus_collapsed}
    K'_\ell = \sum_{a,b} c_{ab}^{(\ell)}\, X^a Z^b \cdot \underbrace{\frac{1}{d}\sum_{k=0}^{d-1} \om^{ka}}_{\displaystyle\delta_{a,0}} \cdot \underbrace{\frac{1}{d}\sum_{j=0}^{d-1} \om^{bj}}_{\displaystyle\delta_{b,0}} = c_{00}^{(\ell)}\, \Id.
\end{equation}
Since $K'_\ell \propto \Id$ for every~$\ell$, the proportionality constants cancel under normalization in~\eqref{eq:thm_rho_ps}, and the postselected state reduces to $\rho_{\mathrm{ps}} = U\rho_0\, U^\dagger$.  Therefore $F = \bra{\psi} \rho_{\mathrm{ps}} \ket{\psi} = 1$.
\end{proof}


\begin{corollary}[Multi-qudit unit fidelity]\label{cor:multi}
    Let $U$ be a unitary acting on $n$ data qudits, each of dimension~$d$, and let $\mathcal{E}$ be an arbitrary CPTP error channel acting on the data register.  If the generalized qudit PCS protocol is applied independently to each data qudit---with a dedicated $Z$-check ancilla and $X$-check ancilla per qudit, for a total of $2n$ ancillas---then postselecting all ancillas on $\ket{0}$ yields unit fidelity in the noiseless-check limit.
\end{corollary}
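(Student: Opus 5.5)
The plan is to repeat the argument of Theorem~\ref{thm:unit} on the $n$-qudit register, using the tensor-product structure of the Heisenberg--Weyl basis. First, by~\eqref{eq:check_identity}, applied qudit by qudit, all left checks can be absorbed into $U$. The check pairs on different qudits act on disjoint data qudits and use distinct ancillas, so they commute with one another. Hence, in the equivalent circuit, the $2n$ check layers act only on the error channel $\mathcal{E}$ that follows $U$. Write the Kraus operators of $\mathcal{E}$ as $K_\ell$, now acting on $(\mathbb{C}^d)^{\otimes n}$. Each admits a unique expansion in the $n$-qudit HW basis,
\begin{equation*}
K_\ell=\sum_{\mathbf a,\mathbf b\in\Zd^n} c^{(\ell)}_{\mathbf a\mathbf b}\, X^{a_1}Z^{b_1}\otimes\cdots\otimes X^{a_n}Z^{b_n}.
\end{equation*}
This follows because the single-qudit HW operators form an orthogonal operator basis under the Hilbert--Schmidt inner product, and tensor products of operator bases are again bases.

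Next, apply the check layers for qudit $i$. Conjugating by $Z_i^k$ and $X_i^{-j}$ acts only on tensor factor $i$. By~\eqref{eq:conjugation} this gives a phase $\om^{k a_i + b_i j}$ on each basis term, and the other factors are untouched. Post-selecting the two ancillas of qudit $i$ on $\ket{0}$ yields the averaging $\frac{1}{d^2}\sum_{k,j}$, exactly as in Propositions~\ref{prop:zcheck} and~\ref{prop:xcheck}. This multiplies each term by $\delta_{a_i,0}\delta_{b_i,0}$. Since the $n$ averaging operations act on different tensor factors and commute, composing them over $i=1,\dots,n$ gives
\begin{equation*}
K'_\ell=\sum_{\mathbf a,\mathbf b} c^{(\ell)}_{\mathbf a\mathbf b}\prod_{i=1}^n\delta_{a_i,0}\delta_{b_i,0}\,\bigotimes_i X^{a_i}Z^{b_i}=c^{(\ell)}_{\mathbf 0\mathbf 0}\,\mathbb{I}_{d^n}.
\end{equation*}

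Finally, substitute into the normalized post-selected state, the analogue of~\eqref{eq:thm_rho_ps}. This gives $\rho_{\mathrm{ps}}=U\rho_0U^\dagger$, so the fidelity is $F=1$. The normalization must be nonzero, which requires $\sum_\ell|c^{(\ell)}_{\mathbf 0\mathbf 0}|^2>0$. The same caveat is implicit in Theorem~\ref{thm:unit}, so I will state it as the condition that the all-zero outcome occurs with nonzero probability.

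The main obstacle is justifying that post-selection on all $2n$ ancillas acts on each Kraus operator as the composition of the single-qudit twirl-projections, that is, as a linear map $K_\ell\mapsto K'_\ell$. This must hold even though $\mathcal{E}$ may entangle the qudits and the checks sit sequentially in the circuit. I would handle it by tracking a single Kraus branch through the whole circuit. Expand $K_\ell$ in the HW basis and follow each basis term through the ancilla analysis of Proposition~\ref{prop:zcheck}, now on each qudit separately. Each product basis term sends the $2n$ ancillas deterministically to $\ket{a_1,b_1,\dots,a_n,b_n}$. Projecting onto $\ket{0}^{\otimes 2n}$ then keeps only the $\mathbf a=\mathbf b=\mathbf 0$ term, with the data register factored out by linearity. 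This makes the Kraus-level statement rigorous without relying on the informal composition argument.
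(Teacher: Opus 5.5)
Your proposal is correct and follows the paper's proof essentially step for step. You expand each Kraus operator in the $n$-qudit HW product basis, and use that the per-qudit checks act on separate tensor factors with separate ancillas, so the phases $\om^{k_i a_i + j_i b_i}$ factorize and the Fourier sums force $\mathbf{a}=\mathbf{b}=\mathbf{0}$, giving $K'_\ell = c^{(\ell)}_{\mathbf{0}\mathbf{0}}\,\mathbb{I}$ and $\rho_{\mathrm{ps}} = U\rho_0 U^\dagger$. Your branch-tracking argument (each product term sends the ancillas to $\ket{a_1,b_1,\dots,a_n,b_n}$) and your caveat that the acceptance probability $\sum_\ell |c^{(\ell)}_{\mathbf{0}\mathbf{0}}|^2$ be nonzero usefully tighten points the paper leaves implicit.
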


\begin{proof}
The HW operators on the $n$-qudit Hilbert space $(\mathbb{C}^d)^{\otimes n}$ are the tensor products $\{X^{a_1}Z^{b_1} \otimes \cdots \otimes X^{a_n}Z^{b_n} : a_i, b_i \in \Zd\}$, and these form an orthonormal basis for the space of $d^n \times d^n$ operators.  Each Kraus operator of the error channel can therefore be expanded as
\begin{equation}\label{eq:cor_kraus}
    K_\ell = \sum_{\mathbf{a}, \mathbf{b}} c_{\mathbf{a}\mathbf{b}}^{(\ell)}\; \bigotimes_{i=1}^{n} X^{a_i} Z^{b_i},
\end{equation}
where $\mathbf{a} = (a_1, \dots, a_n)$ and $\mathbf{b} = (b_1, \dots, b_n)$.

Since the checks on distinct qudits act on different tensor factors and are controlled by independent ancillas, the transformed Kraus operator under all $2n$ check layers factorizes as
\begin{equation}\label{eq:cor_transform}
    K'_\ell = \sum_{\mathbf{a},\mathbf{b}} c_{\mathbf{a}\mathbf{b}}^{(\ell)} \bigotimes_{i=1}^{n} X^{a_i} Z^{b_i} \;\cdot\; \prod_{i=1}^{n} \underbrace{\frac{1}{d}\sum_{k_i=0}^{d-1} \om^{k_i a_i}}_{\displaystyle\delta_{a_i,0}} \cdot \underbrace{\frac{1}{d}\sum_{j_i=0}^{d-1} \om^{j_i b_i}}_{\displaystyle\delta_{b_i,0}}.
\end{equation}
The Fourier sums enforce $a_i = b_i = 0$ for every $i = 1, \dots, n$.  Thus $K'_\ell = c_{\mathbf{0}\mathbf{0}}^{(\ell)}\, \Id^{\otimes n}$, and the result follows from Theorem~\ref{thm:unit} by the same normalization argument.
\end{proof}

\section{Ququart Pauli-check Methods}

Having established the generalized PCS framework for arbitrary dimension $d$ in Section~\ref{sec:general_pcs}, we now illustrate the protocol concretely in the ququart ($d = 4$) space. In higher-dimensional Hilbert spaces, the number of possible states increases, and each state has a specific controlled-gate implementation that enables phase kickback. For a superposition state, more than one controlled operation is needed to kick back a phase onto more than one basis state, as illustrated in Fig.~\ref{fig:pk_4dim}.

\begin{figure}[htbp]
    \centering
    \scalebox{0.67}{
        \begin{quantikz}
            \lstick{$\alpha \ket{0} + \beta \ket{1} + \gamma \ket{2} + \lambda \ket{3}$} & \qw & \gate[style={circle,draw,inner sep=-2pt}]{3} \vqw{1} & \gate[style={circle,draw,inner sep=-2pt}]{1} \vqw{1} & \qw & \rstick{$\alpha \ket{0} - \beta \ket{1} + \gamma \ket{2} - \lambda \ket{3}$} \\
            \lstick{$\frac{1}{2} ( \ket{0} - \ket{1} + \ket{2} - \ket{3})$} & \qw & \gate{M^1_4} & \gate{M^3_4} & \qw & \rstick{$\frac{1}{2} ( \ket{0} - \ket{1} + \ket{2} - \ket{3})$}
        \end{quantikz}
    }
    \caption{Phase kickback in dimension-4 Hilbert Space. Here, a phase of -1 gets kicked back onto the basis states of $\ket{1}$ and $\ket{3}$ of the top ququart.}
    \label{fig:pk_4dim}
\end{figure}

An important note about phase kickback is that it requires commutation relations between the Pauli/generalized Pauli group (Heisenberg-Weyl group) to send a phase from one qubit/qudit to another. In Fig.~\ref{fig:pk_2dim} and Fig.~\ref{fig:pk_4dim}, it can be observed that during the preparation of the state, the phase of the target qubit/qudit can be initialized using the gate $\mathbf{Z}$ for the two-dimensional circuit and the gates $\mathbf{Z}^{2}_{4}$ for the dimension-4 circuit (after the qubit/qudit is initialized to an equal and maximal superposition without phase). Because $\mathbf{Z}$ and $\mathbf{X}$ commute only up to a nontrivial phase, 
a phase appears on the top qubit/qudit.

\subsection{Detecting Ququart Basis Flips}

In $d=4$ Hilbert space, the generalized Pauli group has phase factors $1$, $i$, $-1$, and $-i$, which are the 4th roots of unity $\omega^k$ for $k \in \{0,1,2,3\}$ with $\omega = e^{2\pi i/4}$. These represent the different types of phase factors appearing from commutation relations in dimension 4. We assume that the target circuits in our experiments contain only Pauli \textbf{X} and \textbf{Z} gate operations. As such, when applying PCS in the ququart space to detect permutation or basis errors, the controlled $\mathbf{Z}$ operations are used to implement the left and right checks because they always commute with $\mathbf{Z}$ type operators in the target circuit and anti-commute with almost all other gates. This distinction from the number of anticommuting gates is relevant because the $\mathbf{Z^{2}_4}$ and $\mathbf{M^{2}_4}$  gates commute with each other. Specifically, the commutation relation $Z^n M^k = \omega^{nk} M^k Z^n$ yields a trivial phase factor $\omega^{nk} = 1$ whenever $nk \equiv 0 \pmod{d}$. For $n = k = 2$ in dimension $d = 4$, we have $\omega^{2 \cdot 2} = \omega^4 = 1$, so $Z^2_4$ alone cannot detect an $M^2_4$ error via phase kickback. This is precisely why the full sequence of all $d-1$ check operators is required in each check layer: while any individual $Z^n_4$ may commute with a particular $M^k_4$, the complete set $\{Z^1_4, Z^2_4, Z^3_4\}$ ensures that every non-identity shift error maps to a particular state in the Fourier basis, allowing the ancilla readout to uniquely identify the error (as stated in Propositions~\ref{prop:zcheck} and \ref{prop:xcheck}).

For the ququart circuits, the left and right check sequences are implemented using these operations:
\begin{equation}
    L_Z = Z_{4}^1 \otimes \ket{3}\bra{3} + Z_{4}^2 \otimes \ket{2}\bra{2}+Z_{4}^3 \otimes \ket{1}\bra{1}+\mathbb{I} \otimes \ket{0}\bra{0}
\end{equation}
\begin{equation}
    R_Z = Z_{4}^1 \otimes \ket{1}\bra{1} + Z_{4}^2 \otimes \ket{2}\bra{2}+Z_{4}^3 \otimes \ket{3}\bra{3}+\mathbb{I} \otimes \ket{0}\bra{0}
\end{equation}

\noindent Note that these are precisely the generalized checks of Eqs.~\eqref{eq:L2} and~\eqref{eq:R2} evaluated at $d = 4$. From Fig.~\ref{fig:pcs-ququart-basis}, it can be seen that all the Z gates in the ququart space are necessary to detect any arbitrary ququart basis error despite $\mathbf{Z^{2}_4}$ and $\mathbf{M^{2}_4}$ commuting with each other. This is because, for the ancilla readout to reflect the error type detected, all the gate operations possible in the Hilbert space need to be implemented in a particular sequence such that the phase kicked back onto the ancilla  causes the right type of interference between the superposed states of the ancilla ququart to yield the state which reflects the error within the circuit. If not all sets of checks are used, the readout fails to distinguish the type of error that has been detected.
The PCS protocol for a single-layer detection of all three basis errors in the ququart space is therefore represented in Fig.~\ref{fig:pcs-ququart-basis}.

\begin{figure}[htbp]
    \centering
    \scalebox{0.6}
    {\begin{quantikz}
        \ket{\psi} & \qw & \qw & \gate{Z^1_4} & \gate{Z^2_4} & \gate{Z^3_4} & \gate{M_{error}} & \gate{Z^1_4} & \gate{Z^2_4}  & \gate{Z^3_4} & \qw & \qw \\
        {\shortstack{$\ket{0}$ \\ \scriptsize ancilla}} & \qw & \gate[style= {rectangle, draw, inner sep=1pt}]{\mathrm{C_4}} & \gate[style={circle,draw,inner sep=-2pt}]{3} \vqw{-1} & \gate[style={circle,draw,inner sep=-2pt}]{2} \vqw{-1} & \gate[style={circle,draw,inner sep=-2pt}]{1} \vqw{-1} & \qw & \gate[style={circle,draw,inner sep=-2pt}]{1} \vqw{-1} & \gate[style={circle,draw,inner sep=-2pt}]{2} \vqw{-1} & \gate[style={circle,draw,inner sep=-2pt}]{3} \vqw{-1} & \gate{\mathrm{C_4}^\dagger} & \meter{}
    \end{quantikz}}
    \caption{PCS Ququart circuit for Basis error detection}
    \label{fig:pcs-ququart-basis}
\end{figure}
Here, the Hadamard gates used in the two-dimensional protocol are replaced with the Chrestenson operator and its conjugate transpose on the left and right sides of the checks respectively. For basis error detection, the right check induces a phase kickback onto the ancilla based on the Z-eigenvalue of the target state as a result of the error. This is because, the error is in the form of a basis change, and the phase factor of the ququart initialized in the $\ket{0}$ state is not changed until its state is changed by the error. This maps the target ququart into a different eigenspace with a phase factor not equal to 1, which is then kicked back onto the ancilla after encountering the right check.

\subsection{Detecting Ququart Phase Flips}
When detecting phase errors, the same stabilizer logic also applies. The only difference is that controlled-X operations are used to construct the checks. Mathematically, the checks for arbitrary phase error detection are given as:
\begin{equation}
    L_X = M_{4}^1 \otimes \ket{1}\bra{1} + M_{4}^2 \otimes \ket{2}\bra{2}+M_{4}^3 \otimes \ket{3}\bra{3}+\mathbb{I}_4 \otimes \ket{0}\bra{0}
\end{equation}
\begin{equation}
    R_X = M_{4}^1 \otimes \ket{3}\bra{3} + M_{4}^2 \otimes \ket{2}\bra{2}+M_{4}^3 \otimes \ket{1}\bra{1}+\mathbb{I}_4 \otimes \ket{0}\bra{0}
\end{equation}

\noindent Also, because the state of the target ququart is altered by the left X checks, the Z-type error maps the error onto the ancilla earlier than in the basis error detection. Thereafter, a similar evolution takes place wherein the target qudit is returned to its expected state after unitary transformation, and the error from the target circuit is read out from the ancilla as a permutation change. The circuit scheme for phase error detection in dimension 4 is illustrated in Fig.~\ref{fig:pcs-ququart-phase}.

\begin{figure}[t]
    \centering
    \scalebox{0.6}
    {\begin{quantikz}
        \ket{\psi} & \qw & \qw & \gate{M^3_4} & \gate{M^2_4} & \gate{M^1_4} & \gate{Z_{error}} & \gate{M^3_4} & \gate{M^2_4}  & \gate{M^1_4} & \qw & \qw \\
        {\shortstack{$\ket{0}$ \\ \scriptsize ancilla}} & \qw & \gate[style= {rectangle, draw, inner sep=1pt}]{\mathrm{C_4}} & \gate[style={circle,draw,inner sep=-2pt}]{3} \vqw{-1} & \gate[style={circle,draw,inner sep=-2pt}]{2} \vqw{-1} & \gate[style={circle,draw,inner sep=-2pt}]{1} \vqw{-1} & \qw & \gate[style={circle,draw,inner sep=-2pt}]{1} \vqw{-1} & \gate[style={circle,draw,inner sep=-2pt}]{2} \vqw{-1} & \gate[style={circle,draw,inner sep=-2pt}]{3} \vqw{-1} & \gate{\mathrm{C_4}^\dagger} & \meter{}
    \end{quantikz}}
    \caption{PCS Ququart circuit for Phase error detection}
    \label{fig:pcs-ququart-phase}
\end{figure}

\subsection{Arbitrary Ququart Error Detection}

For circuits with both phase flip and basis change errors present, a combination of the two error detection techniques is employed. In this case, the basis check is sandwiched between the phase checks. The checks are implemented in this order to prevent the target qudits from undergoing phase changes before encountering the error. For arbitrary error detection in ququart space, we utilize the generalized error channel for higher-dimensional Hilbert spaces to simulate the error on the target unitary operation.
We also describe the PCS scheme for the checks as follows.
\begin{equation}
    L_X = M_{4}^1 \otimes \ket{1}\bra{1} + M_{4}^2 \otimes \ket{2}\bra{2}+M_{4}^3 \otimes \ket{3}\bra{3}+\mathbb{I}_4 \otimes \ket{0}\bra{0}
\end{equation}

\begin{equation}
    L_Z = Z_{4}^1 \otimes \ket{3}\bra{3} + Z_{4}^2 \otimes \ket{2}\bra{2}+Z_{4}^3 \otimes \ket{1}\bra{1}+\mathbb{I}_4 \otimes \ket{0}\bra{0}
\end{equation}
\begin{equation}
    R_X = M_{4}^1 \otimes \ket{3}\bra{3} + M_{4}^2 \otimes \ket{2}\bra{2}+M_{4}^3 \otimes \ket{1}\bra{1}+\mathbb{I}_4 \otimes \ket{0}\bra{0}
\end{equation}

\begin{equation}
    R_Z = Z_{4}^1 \otimes \ket{1}\bra{1} + Z_{4}^2 \otimes \ket{2}\bra{2}+Z_{4}^3 \otimes \ket{3}\bra{3}+\mathbb{I}_4 \otimes \ket{0}\bra{0}
\end{equation}

\noindent where $L_X$ and $L_Z$ represent left checks for phase errors and
basis permutation errors, and $R_X$ and $R_Z$ represent right checks for the respective
phase and basis errors. The PCS scheme for arbitrary ququart error therefore becomes what is seen in Fig.~\ref{fig:arbitrary-ququart}.

\begin{figure}[htbp]
    \centering
    \scalebox{0.6}{
    \begin{quantikz}
        \lstick{$\ket{\psi}$} & \qw & \gate[style={rectangle, draw, inner sep=5pt}]{L_Z} & \gate[style={rectangle, draw, inner sep=5pt}]{L_X} & \gate{\mathcal{D}_p} & \gate[style={rectangle, draw, inner sep=5pt}]{R_X} & \gate[style={rectangle, draw, inner sep=5pt}]{R_Z} & \qw & \qw \\
        \lstick{\shortstack{$\ket{0}$ \\ \scriptsize ancilla}} & \gate[style={rectangle, draw, inner sep=2pt}]{\mathrm{C_4}} & \qw & \gate[style={rectangle, draw, rounded corners, inner sep=-1pt}]{\text{3, 2, 1}} \vqw{-1} & \qw & \gate[style={rectangle, draw, rounded corners, inner sep=-1pt}]{\text{1, 2, 3}} \vqw{-1} & \qw & \gate{\mathrm{C_4}^\dagger} & \meter{} \\
        \lstick{\shortstack{$\ket{0}$ \\ \scriptsize ancilla}} & \gate[style={rectangle, draw, inner sep=2pt}]{\mathrm{C_4}} & \gate[style={rectangle, draw, rounded corners, inner sep=-1pt}]{\text{1, 2, 3}} \vqw{-2} & \qw & \qw & \qw & \gate[style={rectangle, draw, rounded corners, inner sep=-1pt}]{\text{3, 2, 1}} \vqw{-2} & \gate{\mathrm{C_4}^\dagger} & \meter{}
    \end{quantikz}
    }
    \caption{PCS Ququart circuit for arbitrary error detection. Here, two ancillas are used as syndrome registers since two sets of checks are used.}
    \label{fig:arbitrary-ququart}
\end{figure}

\section{Evaluation of Qudit PCS}


\subsection{Simulation Setup}
 
Our experimental analysis evaluates the fidelity and post-selection acceptance rate of PCS-mitigated circuits across varying Hilbert-space dimensions. All simulations are implemented in Cirq~\cite{cirq} using density matrix evolution, which allows exact computation of output state fidelity without sampling noise. We model errors using the $d$-dimensional depolarizing channel,
\begin{equation}
    \mathcal{E}(\rho) = (1-p)\,\rho + \frac{p}{d^2 - 1}\sum_{(a,b)\neq(0,0)} X^a Z^b\,\rho\,(X^a Z^b)^\dagger,
    \label{eq:depolarizing}
\end{equation}
where $p$ is the error rate and the sum runs over all $d^2 - 1$ non-identity Heisenberg-Weyl operators. A single depolarizing channel is applied after each one-qudit gate, while two independent channels (one per qudit) are applied following each two-qudit gate. We assume that errors occur only within the target circuit. The check gates are treated as noiseless so that we can isolate the detection performance of the PCS protocol itself.
 
We evaluate PCS on two state preparation circuits. The first, which we denote Mod-2, applies a single $M^2_d$ gate to a qudit initialized in $|0\rangle$, preparing the state $|2 \bmod d\rangle$. Because this circuit contains only a single shift-type gate, a single layer of $Z$-checks with one ancilla suffices for error detection. The second circuit, denoted GHZ-2, prepares a two-qudit generalized GHZ state $\frac{1}{\sqrt{d}}\sum_{k=0}^{d-1}|k,k\rangle$ using a Chrestenson gate followed by a controlled-$X$ operation. This circuit contains both shift and phase-type operations, requiring two layers of checks with two ancillas for full error detection.
 
In both cases, fidelity is computed as $F = \langle \psi_{\text{ideal}} | \rho_{\text{out}} | \psi_{\text{ideal}} \rangle$, where $\rho_{\text{out}}$ is the output density matrix after post-selection on the $|0\rangle$ ancilla outcome(s). The post-selection acceptance rate is the probability of obtaining $|0\rangle$ on all ancilla measurements.

\begin{figure}[!htbp]
    \centering
    \includegraphics[width=0.48\textwidth]{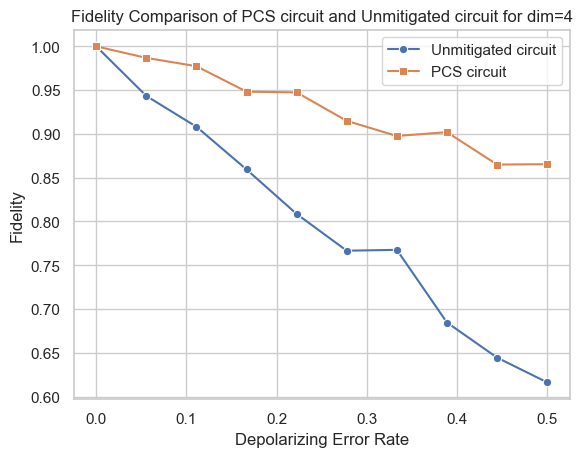}
    \caption{Fidelity comparison of PCS mitigated vs. unmitigated Modulo Addition 2 circuit in dimension 4.}
    \label{fig:PCS-mod2-dim4}
\end{figure}

\begin{figure}[!htbp]
    \centering
    \includegraphics[width=\columnwidth]{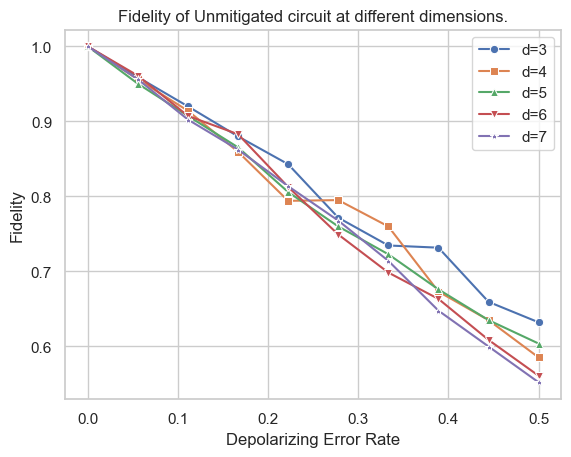}
    \vspace{0.2cm}
    \includegraphics[width=\columnwidth]{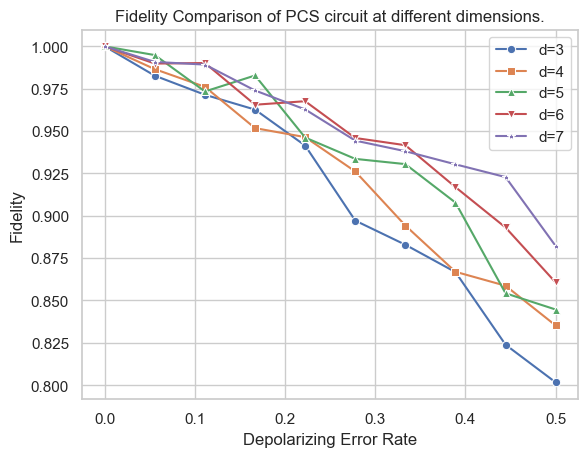}
    \caption{Fidelity of Mod-2 circuits across dimensions $d=3$ 
    through $d=7$. (Top) Unmitigated. (Bottom) PCS mitigated.}
    \label{fig:mod2-dims-comparison}
\end{figure}

\begin{figure}[!htbp]
    \centering
    \includegraphics[width=\columnwidth]{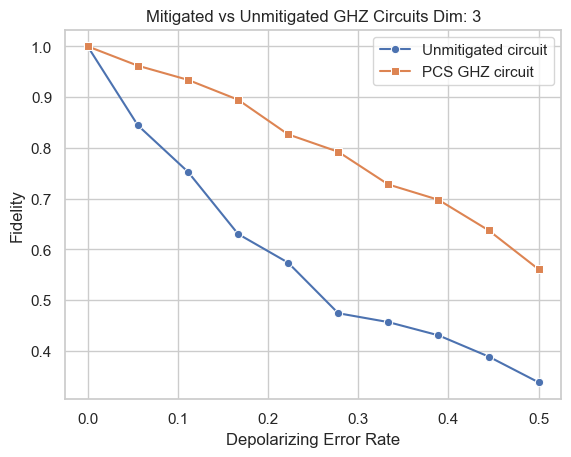}\\[0.3cm]
    \includegraphics[width=\columnwidth]{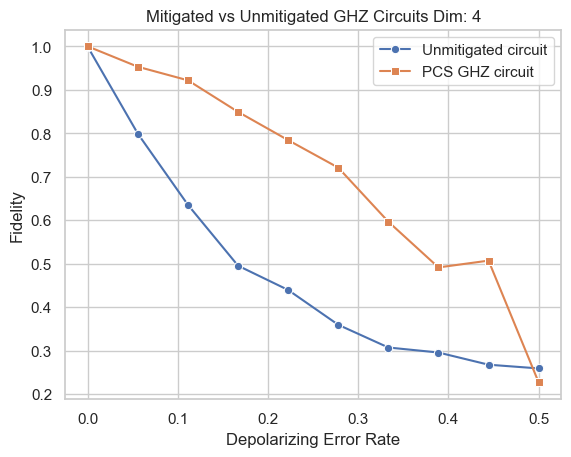}\\[0.3cm]
    \includegraphics[width=\columnwidth]{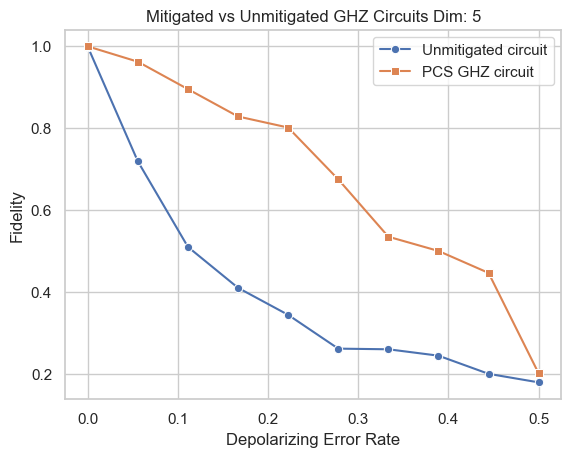}
    \caption{Fidelity comparison for PCS mitigated vs.\ unmitigated GHZ-2 circuits for dimensions 3, 4, and 5 (top to bottom).}
    \label{fig:ghz_dim_3_4_5}
\end{figure}

\subsection{Mod-2 Circuit Results}
We first compare the fidelity of the Mod-2 circuit mitigated with PCS against its unmitigated variant for the dimension-4 Hilbert space. In Fig. \ref{fig:PCS-mod2-dim4} we see PCS achieves fidelities between 0.9 and 1.0 at low error rates and maintains fidelity readings above 0.85 for higher error rates ($0.2 \le p \le 0.5$). For subsequent readings across dimensions $d=3$ to $7$, Fig.~\ref{fig:mod2-dims-comparison} shows PCS improvements relative to the unmitigated scenario, with higher dimensions showing improvements in fidelity. More specifically, we see that PCS is able to maintain fidelities above $0.80$ across all dimensions up to a depolarizing error rate of $0.5$.

Also observe that the relative fidelity improvements increase as the dimension increases. This dimensional scaling can be understood from the structure of the depolarizing channel in Eq.~\eqref{eq:depolarizing}. As $d$ increases, the total error probability $p$ is distributed across $d^2 - 1$ non-identity HW operators, so the weight of each individual error component decreases as $p/(d^2 - 1)$. Since PCS detects each non-identity component independently, the fraction of shots that pass post-selection while still carrying an undetected error shrinks with increasing dimension. Quantitatively, for the Mod-2 circuit at $p = 0.3$, PCS provides a fidelity improvement of approximately 20\% at $d = 3$, increasing to roughly 32\% at $d = 7$.



\subsection{GHZ-2 Circuit Results}



In contrast to Mod-2 circuits, we observe different results in the GHZ-2 circuit. Comparing the mitigated and unmitigated GHZ-2 circuit for dimensions 3, 4, and 5 (see Fig.~\ref{fig:ghz_dim_3_4_5}), we observe that fidelity results improve in the lower-noise regime (for $0<p<0.2$), but fidelity gains begin to deteriorate once $p$ exceeds $0.2$. 



When we consider only mitigated fidelity results across dimensions (see Fig.~\ref{fig:pcs-ghz-dims}), we observe a trend similar to that in Fig.~\ref{fig:ghz_dim_3_4_5}. Moreover, it can be observed that lower-dimensional circuits demonstrate stronger mitigation improvements relative to higher-dimensional ones. Although this may indicate that PCS performance decreases in higher dimensions, it does not show the robustness of the error mitigation technique in lower error regimes. This behavior is well demonstrated in Fig.~\ref{fig:pcs-p(0.025)-dims}, where the fidelity results remain highly stable across dimensions.



\begin{figure}[t]
    \centering
    \includegraphics[width=0.48\textwidth]{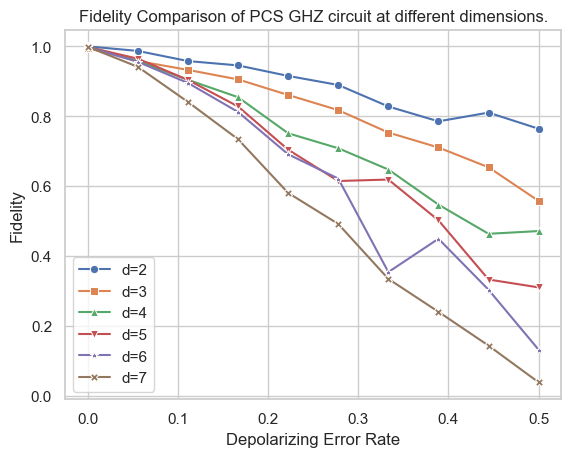}
    \caption{Comparison of fidelity performance of PCS mitigated GHZ-2 circuits across multiple dimensional Hilbert spaces from dim=2 to dim=7.}
    \label{fig:pcs-ghz-dims}
\end{figure}
\FloatBarrier

\begin{figure}[h]
    \centering
    \includegraphics[width=0.48\textwidth]{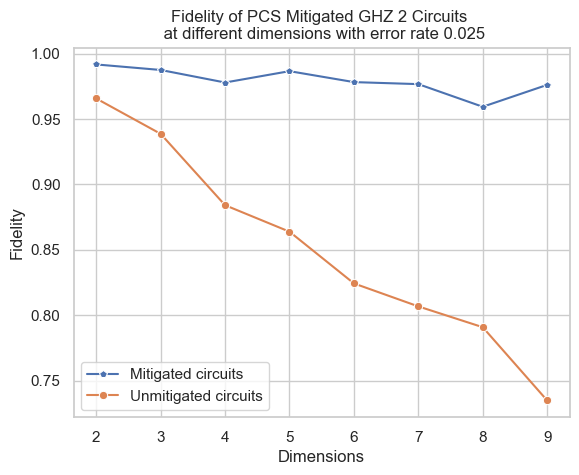}
    \caption{Fidelity of PCS mitigated vs unmitigated GHZ-2 circuit across multiple dimensional Hilbert spaces at error rate of 0.025. Fidelity values of mitigated are kept above roughly 97.5\% in a seemingly constant progression while unmitigated circuit decreases across dimensions 2 through to 9.} 
    \label{fig:pcs-p(0.025)-dims}
\end{figure}
\FloatBarrier

\subsection{Post-selection Overhead}

\noindent
It is important to note that PCS introduces an inherent post-selection tradeoff, which increases with increasing Hilbert-space dimension. As shown in Fig. ~\ref{fig:post-selection-measurements}, the Mod-2 circuit exhibits a seemingly linear decline in post-selection acceptance rates, while the GHZ-2 circuit shows an exponential decline that steepens as dimension increases.

This difference follows from the circuit structure: the Mod-2 circuit contains a single gate and requires only one ancilla, so the acceptance rate is governed by a single depolarizing channel. The GHZ-2 circuit, by contrast, contains both single- and two-qudit gates with two ancillas, meaning errors at each gate location independently contribute to the probability of failing post-selection. As dimension grows, the number of non-identity error components increases as $d^2 - 1$ per qudit, which compounds across the multiple error sites in the GHZ-2 circuit.

    
    
    



\begin{figure}[htbp]
    \centering
    
    \includegraphics[width=0.5\textwidth]{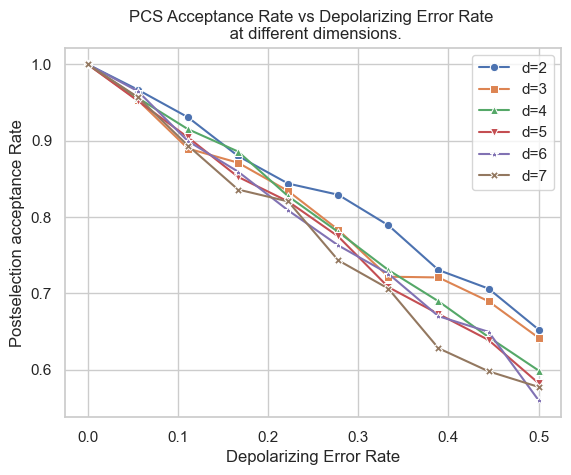}
    \vspace{0.3cm}
    
    \includegraphics[width=0.5\textwidth]{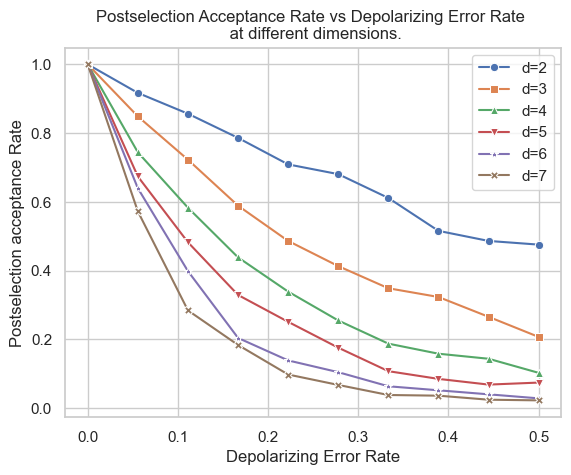}
    
    \caption{(Top) Post-selection acceptance rate of Mod-2 circuit. (Bottom) Post-selection acceptance rate of GHZ-2 circuit.}
    \label{fig:post-selection-measurements}
\end{figure}
\FloatBarrier

\section{Future Work}
Our analysis of Mod-2 circuits shows that generalized PCS becomes increasingly effective in higher-dimensional Hilbert spaces, particularly for low-depth circuits. Therefore, one potential area of investigation is to apply the Generalized PCS protocol to complex, lower-dimensional circuits compressed into higher dimensions to detect errors that arise during compression. This will provide further insight into the scalability of qudit PCS in such circuits in terms of improving the fidelity of complex circuits.

Another future line of work will be balancing tradeoffs associated with circuit depth and the mitigation performance of our working PCS protocol. Currently, our model detects errors by implementing the complete Pauli HW gate set. Reducing the number of gates per check while maintaining the error detection capabilities of generalized PCS will enable higher fidelity measurements at a lower circuit depth, especially when considering the possibility of the checks themselves being susceptible to errors. One area to look at may be deriving a structured subset of operators capable of detecting arbitrary phase and permutation errors. However, such a reduction may be dependent on the dimension and may not be able to capture all error types in some higher-dimensional systems.

    
    
    

    
    
    

Since the checks were assumed to be noiseless, one key development will be to investigate the performance of generalized PCS under real hardware constraints, where the checks are also prone to errors. This is currently challenging because most quantum computers available now are binary-level-based systems and do not offer capabilities for higher-dimensional computation.

Moreover, the gates used in this experiment are based on the theoretical Heisenberg-Weyl group, which may not efficiently map onto hardware-native gate sets. One area of future work, consequently, will be exploring state-of-the-art universal qudit gate sets that are compatible with current or near-term quantum hardware systems.

\section{Conclusion}
We successfully obtained a generalized qudit Pauli gate set for arbitrary dimensions, which was used to generalize the Pauli Check Sandwiching (PCS) protocol. 
Using this framework, we construct generalized left and right Pauli checks for arbitrary dimensions and illustrate the protocol concretely in the ququart ($d = 4$) space. In Propositions 1 and 2, we prove that the generalized PCS detects every error that does not commute with our checks by showing that every HW error on the payload circuit deterministically maps to a state that encodes the type of error, whereby M-type errors collapse to their corresponding error type under Z-checks and Z-type errors collapse into their respective error type under M-type checks. Since our checks are able to detect any Pauli-type error, by extension of Theorem 1 and under the assumption that our checks are noiseless, we prove that by applying the right number of layers of Pauli checks to detect M-type and Z-type errors, our PCS protocol can achieve unit fidelity for an arbitrary target. 

The performance of the generalized PCS protocol is tested using two types of circuits - a simple Mod-2 circuit and a GHZ-2 circuit - across multiple dimensions, comparing the fidelities of PCS-mitigated and unmitigated circuits. We simulate errors within our circuits using a dimension-dependent depolarizing error channel. For the simple Mod-2 circuit, we achieved fidelity gains ranging from roughly 20\% for $d=3$ to 32\% for $ d=7$. Our results show improvement in overall fidelity with increasing dimension, demonstrating that at low circuit depths, our PCS model improves the fidelity of circuits in higher dimensions.

For GHZ-2 simulations across dimensions, lower dimensions perform better than higher dimensions in terms of fidelity results after PCS error mitigation. However, at error rates below 0.05, PCS is observed to be robust across multiple dimensions, providing fidelity readouts above 90\%.

Overall, these results demonstrate that PCS is particularly effective when applied to compressed, low-depth circuits, where it can significantly improve fidelity and provide more robust computational outcomes, especially as system dimension increases.

\appendix[Proof of Proposition~2]\label{app:xcheck}

We provide the complete derivation for the $X$-check case.  Let $\ket{\psi}_D$ denote the state of the data qudit and let the ancilla be initialized to $\ket{0}_A$.  After the Chrestenson gate, left $X$-check $L_X$, error $E = X^a Z^b$, and right $X$-check $R_X$, the joint state is
\begin{equation}\label{eq:app_after_checks}
    \frac{1}{\sqrt{d}} \sum_{k=0}^{d-1} X^{d-k}\, X^a Z^b\, X^k \ket{\psi}_D \otimes \ket{k}_A,
\end{equation}
where we have used $X^0 = \Id$ to include the $k=0$ term in the sum.  Combining the leading shift operators gives $X^{d-k} X^a = X^{a+d-k}$.  We then commute $Z^b$ past $X^k$ using~\eqref{eq:comm_dual}, $Z^b X^k = \om^{bk} X^k Z^b$, and collect the shift exponents via $X^{a+d-k} X^k = X^{a+d} = X^a$ (since $X^d = \Id$).  The state simplifies to
\begin{equation}\label{eq:app_simplified}
    \frac{1}{\sqrt{d}}\, X^a Z^b \ket{\psi}_D \otimes \sum_{k=0}^{d-1} \om^{bk} \ket{k}_A.
\end{equation}
Note that the data register has factored out of the sum.  Applying the inverse Chrestenson gate $\Cd^\dagger$ to the ancilla, with $\Cd^\dagger \ket{k} = \frac{1}{\sqrt{d}} \sum_{m=0}^{d-1} \om^{-km} \ket{m}$, yields
\begin{equation}\label{eq:app_fourier_sum}
    X^a Z^b \ket{\psi}_D \otimes \frac{1}{d} \sum_{m=0}^{d-1} \left( \sum_{k=0}^{d-1} \om^{k(b-m)} \right) \ket{m}_A.
\end{equation}
By the orthogonality of the $d$th roots of unity,
\begin{equation}\label{eq:app_fourier}
    \frac{1}{d} \sum_{k=0}^{d-1} \om^{k(b-m)} = \delta_{b,m \bmod d},
\end{equation}
and the final joint state collapses to
\begin{equation}\label{eq:app_final}
    X^a Z^b \ket{\psi}_D \otimes \ket{b \bmod d}_A.
\end{equation}
The ancilla readout deterministically identifies the phase index~$b$ of the error.  Since $E$ yields $\ket{0}_A$ if and only if $b \equiv 0 \pmod{d}$, the $X$-checks leave undetected precisely the pure shift errors $\{X^a : a \in \Zd\}$. \qed

\bibliographystyle{plain}
\bibliography{refs}

\end{document}